\newacronym{hhl}{HHL}{Harrow–Hassidim–Lloyd}
\newacronym{ai}{AI}{Artificial Intelligence}
\newacronym{ml}{ML}{Machine Learning}
\newacronym{pqc}{PQC}{Post-Quantum Cryptography}
\newacronym{qkd}{QKD}{Quantum Key Distribution}
\newacronym{qml}{QML}{Quantum ML}
\newacronym{qpp}{QPP}{Quantum Permutation Pad}
\newacronym{xqml}{XQML}{eXplainable Quantum ML}
\newacronym{fpga}{FPGA}{Field Programmable Gate Arrays}
\newtheorem{theorem}{Theorem}[section]
\newtheorem{lemma}[theorem]{Lemma}
\newtheorem{remark}{Remark}[section]
\newif\ifcomments
\begin{document}

\title{Quantum CORDIC --- Arcsine on a Budget}

\author{
  \IEEEauthorblockN{Iain Burge}
  \IEEEauthorblockA{SAMOVAR, Institut Polytechnique\\ de Paris, Palaiseau, France \\
    iain-james.burge@telecom-sudparis.eu}
  \and
  \IEEEauthorblockN{Michel Barbeau}
  \IEEEauthorblockA{Carleton University, School\\ of Computer Science, Ottawa, Canada\\
    barbeau@scs.carleton.ca}
  \and
  \IEEEauthorblockN{Joaquin Garcia-Alfaro}
  \IEEEauthorblockA{SAMOVAR, Institut Polytechnique\\ de Paris, Palaiseau, France \\
    joaquin.garcia\_alfaro@telecom-sudparis.eu\\
  }
}

\newcommand{\etal}{\textit{et al}. }
\newcommand{\ie}{\textit{i.e.}, }
\newcommand{\cf}{\textit{cf.}, }
\newcommand{\etc}{\textit{etc.}}
\newcommand{\ex}{\textit{e.g.}, }

\maketitle

\begin{abstract}
This work introduces a quantum algorithm for computing the function arcsine, with arbitrary accuracy. 
We leverage a technique from embedded computing and field-programmable gate arrays, called COordinate Rotation DIgital Computer (CORDIC).
CORDIC is a family of iterative algorithms that, in a classical context, can approximate various trigonometric, hyperbolic, and elementary functions using only bit shifts and additions.
Adapting CORDIC to the quantum context is non-trivial, as the algorithm traditionally uses several non-reversible operations.
We detail a method for CORDIC that avoids such non-reversible operations.
We propose an approach to calculate the arcsine function reversibly with CORDIC.
For n bits of precision, our method has space complexity of order n qubits, a layer count in the order of n times log n, and a \texttt{CNOT} count in the order of n squared. 
This primitive function is a required step for the Harrow–Hassidim–Lloyd (HHL) algorithm, is necessary for quantum digital-to-analog conversion, can simplify a quantum speed-up for Monte-Carlo methods, and has direct applications in the quantum estimation of Shapley values. 

\end{abstract}

\IEEEpeerreviewmaketitle

\section{Introduction}
\label{sec:introduction}

\noindent The problem of quantum digital-to-amplitude conversion \cite{mitarai2019quantum} is an essential step in various quantum algorithms, including \gls{hhl} --- a quantum algorithm for solving linear equations~\cite{harrow2009quantum}, a quantum speed-up for Monte Carlo methods \cite{montanaro2015quantum}, and a quantum algorithm for Shapley value estimation \cite{burge2023quantum}. However, to efficiently perform this conversion it is necessary to calculate inverse trigonometric functions, which is computationally expensive \cite{mitarai2019quantum}.
As quantum computation is in its early stages, it is valuable to consider techniques used in early classical computing intended for weak hardware. 
In particular, this work adapts the classical CORDIC algorithm for \texttt{arcsin}\footnote{Conventionally, the name of the function is acrsine, while the function itself is written \texttt{arcsin}.} to the quantum setting.

Section~\ref{sec:relwork} presents related work. Sections~\ref{sec:preliminaries} and~\ref{sec:cCORDIC} provide  preliminaries to our problem, and introduces the approach of classical CORDIC to approximate \texttt{arcsin} on minimal hardware. Section~\ref{sec:qCORDIC} translates the approach to a quantum setting. Section~\ref{sec:tranImplem} demonstrates the use of quantum CORDIC to perform a quantum digital-to-analogue transformation.
Section~\ref{sec:complexity} gives a sketch proof of the time and space complexity of quantum CORDIC and provides simulation results. 
Section~\ref{sec:conc} concludes the work.

\section{Related Work}
\label{sec:relwork}

\noindent There have been a few proposed algorithms for the calculation of arcsine on quantum computers. 
For instance, H\"{a}ner et al.~\cite{haner2018optimizing} leverages quantum parallelism to perform piecewise polynomial approximations of a wide range of functions.
It can be seen as a promising solution that scales well in precise situations, e.g., more than $32$ bits.
The approach is flexible, but may not be ideal in the short term, since it requires some memory access, a fair number of multiplications, and square rooting for the extreme portions of $\arcsin{x}$, i.e., $x\in [-1,-0.5]\cup[0.5,1]$.
There also exists the quantum function-value binary
expansion method which performs a binary search to approximate $\arcsin{x}$, among other elementary functions~\cite{wang2020quantum}.
The solution is broadly applicable and has some interesting parallels to the CORDIC algorithm. 
However, the arcsine implementation requires a squaring operation per bit of output, which is difficult to implement in the near term.

The above techniques are valuable for various applications when sufficient hardware is available. 
For instances where we do not have those resources available, we consider a technique from early computing.
In the late 1950s, Volder~\cite{volder1959cordic} introduced the CORDIC algorithm to solve an important family of trigonometric equations using the underpowered hardware of the time. 
This was later generalized to a single algorithm that solves various elementary functions, including multiplication, division, the trigonometric functions, \texttt{arctan}, the hyperbolic trigonometric functions, $\ln$, $\exp$, and square root \cite{walther1971unified}. 
A positive trait is that many of these operations can exist with the same architecture by leveraging simple parameters \cite{lakshmi2010cordic}.
CORDIC has been further modified to approximate various other functions, including our target function, \texttt{arcsin}~\cite{mazenc1993computing}.

\section{Preliminaries}
\label{sec:preliminaries}

\noindent Suppose you have the following $(n+1)$-qubit quantum state:
\begin{equation*}
    \ket{\phi} = \sum\limits_{k=0}^{N-1} \alpha_k \ket{h_k}_\texttt{in} \ket{0}_\texttt{out},
\end{equation*}
where $N = 2^{n}$, the \texttt{in} register has size $n$, the \texttt{out} register is of size one, and $h_k$ is the fixed point number $k2^{-n}$.
A particularly useful transformation on this state is given by the following equation:
\begin{equation}\label{eq:goalTransform}
    U\ket{\phi} = \sum\limits_{k=0}^{N-1} \alpha_k \ket{h_k}_\texttt{in} \left(\sqrt{1-h_k} \ket{0} + \sqrt{h_k} \ket{1}\right)_\texttt{out},
\end{equation}
which encodes the binary values $h_k$ into the probability amplitudes of the output register. This transformation is an important instance of quantum digital-to-analog conversion~\cite{mitarai2019quantum}.

A fast solution to this problem is valuable in multiple algorithms.
For instance, a quantum algorithm for approximating Shapley values \cite{burge2023quantum} could leverage this transformation to simplify the state preparation step.
Another important example is the quantum speedup of Monte Carlo methods \cite{montanaro2015quantum} which requires the transformation $W$, defined as:
\begin{equation*}
    \ket{x}_\texttt{in}\ket{0}_\texttt{out}
    \xrightarrow{W}
    \ket{x}_\texttt{in}\left(\sqrt{1-\Phi(x)}\ket{0} + \sqrt{\Phi(x)}\ket{1} \right)_\texttt{out}.
\end{equation*}
where $\Phi$ is a function from binary strings to the interval from zero to one.
This can be translated easily to our transformation.

We split the computation into two steps, where we first implement a reversible algorithm to compute $\Phi$ in the computational basis, and then apply our transformation.
Take register \texttt{in} to have $m$ qubits, the register \texttt{aux} to have $n$ qubits, and the register \texttt{out} to have one qubit. We first apply the unitary $V$ using the \texttt{in} register as input and outputting the $n$-bit approximate result $\tilde{\Phi}$ to \texttt{aux}, as follows:
\begin{equation*}
    \ket{x}_\texttt{in}\ket{0}_\texttt{aux} \ket0_{\texttt{out}} \xrightarrow{V} \ket{x}_\texttt{in}\ket{\tilde{\Phi}(x)}_\texttt{aux}\ket{0}_\texttt{out}.
\end{equation*}
where $\ket{x}_\texttt{in}$ is a computational basis vector, and $\tilde\Phi(x)$ is an $n$ bit approximation of $\Phi(x)$.
Next we perform $U$, Equation~\eqref{eq:goalTransform}, on the resulting \texttt{aux} and \texttt{out} register giving,
\begin{equation*}
    \ket{x}_\texttt{in} \ket{\tilde\Phi(x)}_\texttt{aux}\left(\sqrt{1-\tilde\Phi(x)}\ket{0} + \sqrt{\tilde\Phi(x)}\ket{1} \right)_\texttt{out}.
\end{equation*}
Finally, to restore the \texttt{aux} register to its initial state, we apply the $V^{-1}$ transformation.
In summary, we can solve the problem using a reversible classical algorithm implementation of $\tilde{\Phi}$. 
We complete the process by encoding the result into the probability amplitude of our output bit.

Another practical use case is as a step of the \gls*{hhl} algorithm~\cite{harrow2009quantum}.
In this case, reciprocal eigenvalues are encoded as binary strings in a superposition.
An essential step is to encode the reciprocals of the eigenvalues in the probability amplitudes of an output bit.
In particular, we can apply a transformation closely related to $U$ from Equation~\eqref{eq:goalTransform}.
Hence, an efficient algorithm to implement the transformation $U$, has immediate benefits to foundational problems.

There are a few naive approaches to implementing the transformation U, Equation~\eqref{eq:goalTransform}.
For example, one to use a lookup table, where each possible input $\ket{h_k}_\texttt{in}$ is separately implemented, but this would require exponential circuit depth.
qRAM could be leveraged \cite{giovannetti2008quantum}, however it would require exponential space.
These approaches could be combined with linear interpolation to perform well at very low precision, but they do not scale well for our suggested applications.

With some naive methods out of the way, we examine a less trivial direction that illustrates the main challenge solved by this paper. Consider the following state,
\begin{equation*}
    \ket{\psi_0} = \sum\limits^{N-1}_{k=0} \alpha_k \ket{h_k}_\texttt{in} \ket{0}_\texttt{aux} \ket{0}_\texttt{out}.
\end{equation*}
where \texttt{in} and \texttt{aux} are $n$-qubit registers, each representing two's complement, fixed point integers in the range $[-2,2)$ of the form $x_0x_1.x_2\cdots x_{n-1}$ where $x_0$ is the sign bit, and \texttt{out} is a single qubit register.
We require $h_k$ to be in range $[0,1)$, meaning if $k>N/4$ then $\alpha_k=0$.
Suppose we apply the function $\arcsin\sqrt{\cdot}$ where \texttt{in} is the input register, and \texttt{aux} is the output register.
Applying this unitary, which we call $F$, yields,
\begin{equation*}
    F\ket{\psi_0} = \ket{\psi_1} = \sum\limits^{N-1}_{k=0} \alpha_k \ket{h_k}_\texttt{in} \ket{\arcsin\sqrt{h_k}}_\texttt{aux} \ket{0}_\texttt{out}.
\end{equation*}

Suppose now that we apply the quantum circuit $R$ from Figure~\ref{fig:encodeBinaryAsAmplitude}.
This would give us state,
\begin{align*}
    R&\ket{\psi_1} = \ket{\psi_2} = \sum\limits^{N-1}_{k=0} \alpha_k \ket{h_k}_\texttt{in} \ket{\arcsin\sqrt{h_k}}_\texttt{aux} \\
    &\cdot \left(\cos\left(\arcsin \sqrt{h_k} \right)\ket{0} + \sin\left(\arcsin \sqrt{h_k} \right)\ket{1}\right)_\texttt{out}.
\end{align*}
Using trigonometry identities, we have that,
\begin{scriptsize}
\begin{align*}
    \ket{\psi_2} 
    = \sum\limits^{N-1}_{k=0} \alpha_k \ket{h_k}_\texttt{in} \ket{\arcsin\sqrt{h_k}}_\texttt{aux} \cdot \left(\sqrt{1-h_k} \ket{0} + \sqrt{h_k} \ket{1}\right)_\texttt{out}.
\end{align*}
\end{scriptsize}
Uncomputing $\arcsin\sqrt\cdot$ by performing $F^{-1}$, we get our desired state, Equation~\eqref{eq:goalTransform}.
This approach transforms our previous challenge into the problem of computing $\arcsin$ reversibly.
This work describes a method to achieve this task efficiently.

\begin{figure}[!t]
\begin{center}
    \begin{adjustbox}{width=\columnwidth}
    \begin{tikzpicture}
    \node[scale=1] {
    \begin{quantikz}
    \lstick[wires=4]{$\ket{\arcsin\sqrt{h_k}}_\texttt{aux}$} &
    \ctrl{4} & \qw & \qw & \qw & \qw &\qw &\qw &
    \\
    &  \qw & \ctrl{3} & \qw & \qw & \qw &\qw &\qw &
    \\
    &\setwiretype{n}\vdots &  & & \ddots & &
    \\
    & \qw & \qw & \qw & \qw & \qw & &\ctrl{1} &\qw 
    \\
    \lstick{$\ket{0}_\texttt{out}$} & \gate{R_{y}\left(\frac{\pi}{4}\right)} & \gate{R_{y}\left(\frac{\pi}{8}\right)} & \qw & \setwiretype{n}\cdots & & \setwiretype{q} & \gate{R_{y}\left(\frac{\pi}{2^{n+1}}\right)} & 
    \end{quantikz}
    };
    \end{tikzpicture}
    \end{adjustbox}
\end{center}
\caption{Unitary transformation $R$ which encodes the binary value of the \texttt{aux} register into the output register. 
Define \hbox{$R_{y}(\omega)=(\cos\omega, -\sin\omega; \sin\omega, \cos\omega)$}.
}
\label{fig:encodeBinaryAsAmplitude}
\end{figure}
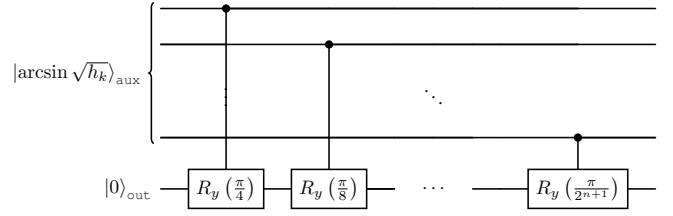

\section{Classical CORDIC for Arcsine}
\label{sec:cCORDIC}

\begin{algorithm}
  \caption{CORDIC Arcsine}\label{al:arcsin}
  \begin{algorithmic}[1]
    \Procedure{Arcsin}{$t$, $\texttt{\#iter}$}\Comment{$t \in [-1,1]$}\Comment{$s(z)=1$ if $z<0$ else $s(z)=0$}
      \State $\texttt{ang} \gets 0$, $\texttt{x} \gets 1$, $\texttt{y} \gets 0$, $\texttt{t} \gets t, \texttt{d} \gets 0$ \label{line:asin-init}
      \For{$i=1, i<\texttt{\#iter}, \texttt{++}i$}
        \State{\footnotesize $\texttt{d}_i \gets [s(\texttt{x}) \land s(\texttt{t}-\texttt{y})] \oplus s(\texttt{x}) \oplus [s(\texttt{x})\land s(\texttt{y})]\oplus s(\texttt{t}-\texttt{y})$\label{line:asin-dUpdate}}
        \If{$\texttt{d}_i$}
          $\texttt{x}, \texttt{y} \gets \texttt{y}, \texttt{x}$ 
        \EndIf
        \For{\texttt{\_ in range(2)}} \label{line:pseudo-roation}
            \State $\texttt{x}, \texttt{y} \gets \texttt{x}-2^{-i}\texttt{y}, \texttt{y}+2^{-i}\texttt{x}$
        \EndFor
        \If{$\texttt{d}_i$}
          $\texttt{x}, \texttt{y} \gets \texttt{y}, \texttt{x}$ 
        \EndIf
        \State $\texttt{t} \gets (1+2^{-2i})\texttt{t}$
        \State $\texttt{ang} \gets \texttt{ang} + (-1)^{\texttt{d}_i}2\arctan{2^{-i}}$ \label{line:asin-thetaUpdate}
      \EndFor
      \State \textbf{return} $\texttt{ang}$
    \EndProcedure
  \end{algorithmic}
\end{algorithm}
\noindent Let us give a brief insight into how CORDIC works. 
To begin, there is an implicit two-dimensional goal vector that encodes the problem. 
The exact vector is unknown, but we have a method to compare it to other vectors. 
A starting vector rotates towards the goal in increasingly small discrete steps, picking the direction of rotation based on our comparison method, until it converges with satisfactory error. 
It is possible to avoid performing full multiplications during the rotations through the use of pseudo-rotations, rotations with a small stretch.
See Figure~\ref{fig:ExCORDIC} for an explicit example.

Our proposed method is based on Mazenc et al.'s algorithm~\cite{mazenc1993computing}.
In this work, Step~1 is slightly improved to better handle an edge case. 
The procedure is fully described in Algorithm~1. 
Each iteration follows the steps below:

\begin{figure}[!hptb]
\centering
\includegraphics[width=.48\textwidth]{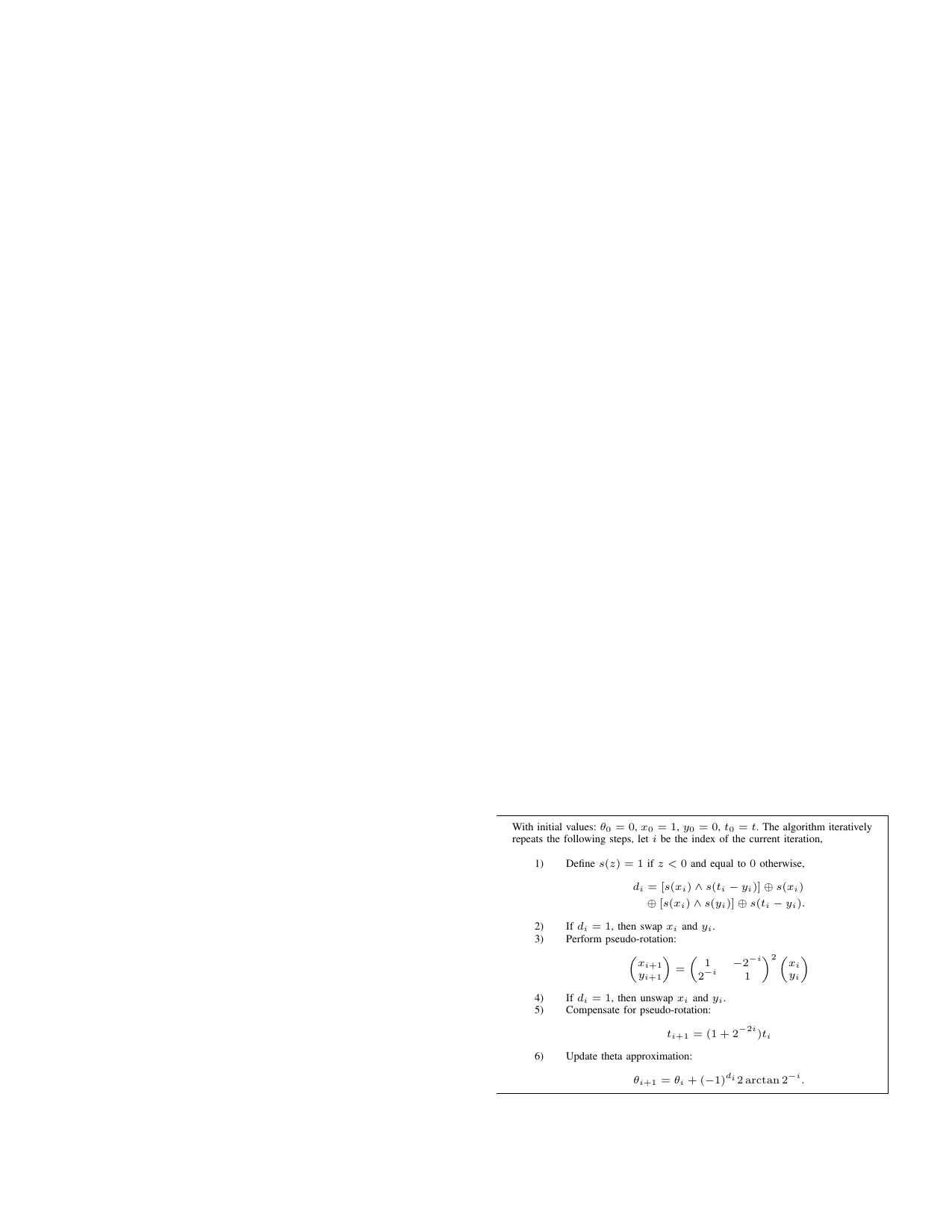}
\end{figure}

We first initialize the problem, given input $t\in[-1,1]$.
Our goal is to get the vector $(x_i,y_i)$ to point in the same direction as the vector of length $1$ with height $t$, i.e., vector $(\sqrt{1-t^2},t)$. 
This goal vector has an angle of $\theta=\arcsin(t)$.
By keeping track of each rotation we make, assuming $(x_i,y_i)$ points in roughly the same direction as the goal vector, we can construct an approximation for $\theta$.
In particular, assuming infinite precision, as $i$ tends to infinity, approximation $\theta_i$ tends to $\theta$.
According to Ref.~\cite{mazenc1993computing}, to achieve $n$ bits of accuracy it takes $n+2$ correct iterations.

Step~1 is a computationally convenient way to check if $\theta_i < \theta$, which we store in variable $d_i$.
If $d_i=0$, then we need to rotate counterclockwise to get closer to $\theta$.
Otherwise, if $d_i=1$, we need to rotate clockwise.
Steps~2 and~4 conjugate Step~3 to achieve the correct rotation direction.
Step~3 performs a pseudo-rotation, slightly stretching our vector $(x_i, y_i)$ by a factor of $1+2^{-2i}$.
It is performed by making the following substitutions twice: $x'=x-2^{-i}y$ and $y'=y+2^{-i}x$.
As a result of the operation, we point closer to the goal direction, but it is necessary to compensate for the stretch.
Step~5 deals with the stretching of the vector $(x_i,y_i)$ by stretching our goal vector by the same amount, $(1+2^{-2i})$.
Note that multiplication by powers of $2$ can be done efficiently through bit-shifting.
Finally, Step~6 records the change to the current angle of vector $(x_i,y_i)$.
Note that the expression $2\arctan2^{-i}$ can be precomputed and directly encoded into hardware.

\begin{algorithm}
  \caption{Mult($\texttt{in},\texttt{aux},m$): $\texttt{in}\gets (1+2^{-m})\texttt{in}$}\label{al:mult}
  \begin{algorithmic}[1]
    \Procedure{Mult}{$\texttt{in}$, \texttt{aux}, $m$}
      \Comment{$\texttt{in}$ is a register of size $n$, \texttt{aux} is a auxiliary register of size $n$ with near 0 value, $m$ is the right shift}
      \State{\texttt{F} $\gets [1,1,2,3,5,8,13,\dots]$}  \Comment{Fibbonacci Sequence}
      \State{$\texttt{\#iter} \gets 2\left\lceil\log_\varphi(\sqrt{5}n/m)/2\right\rceil$} \Comment{$\varphi = (1+\sqrt{5})/2$}
      \State{$\texttt{aux} \gets \texttt{aux}+\texttt{in}$} \label{line:initStart}
      \For{$i=\texttt{\#iter}$, $i\geq 0$, $\texttt{--}i$} \label{line:unbuildX}
        \If{$i$ even}
          \State $\texttt{in} \gets \texttt{in}-(-1)^{\texttt{F}[i]}(\texttt{aux}\gg(m\cdot \texttt{F}[i]))$ 
        \Else
          \State $\texttt{aux} \gets \texttt{aux}-(-1)^{\texttt{F}[i]}(\texttt{in}\gg(m\cdot \texttt{F}[i]))$
        \EndIf
      \EndFor
      \State $\texttt{aux} \gets \texttt{aux}-\texttt{in}$\label{line:emptyAux2}
    \EndProcedure
  \end{algorithmic}
\end{algorithm}

\begin{figure*}
     \centering
     \begin{subfigure}[b]{0.32\textwidth}
         \centering
         \includegraphics[width=\textwidth]{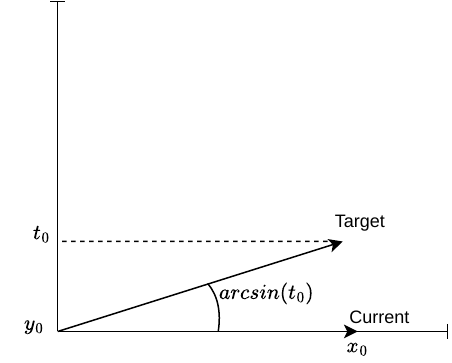}
         \caption{Initial values, target vector is \hbox{$((1-t_0^2)^{1/2},t_0)\approx(.96,.29)$}, note that the first component is implicit, and we need not calculate it.
         The current vector is $(x_0,y_0)=(1,0)$.
         Performing Line~\ref{line:asin-dUpdate} shows that our next pseudo-rotation is counterclockwise.}
         \label{fig:ExCORDICiter0}
     \end{subfigure}
     \hfill
     \begin{subfigure}[b]{0.32\textwidth}
         \centering
         \includegraphics[width=\textwidth]{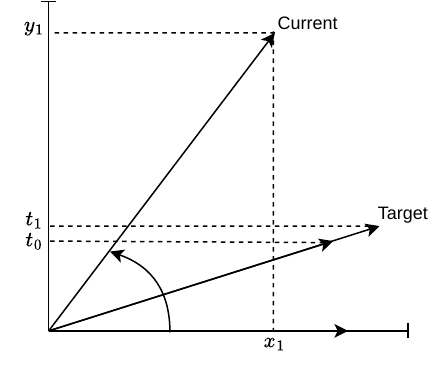}
         \caption{
         The \texttt{for} loop beginning at Line~\ref{line:pseudo-roation} performs the pseudo-rotation on our Current vector, yielding $(x_1,y_1)=(0.75,1)$.
         The target vector is stretched by a factor of $5/4$ to compensate for the pseudo-rotation, giving us $t_1\approx0.36$.
         Line~\ref{line:asin-dUpdate} determines that the next pseudo-rotation is clockwise.
         }
         \label{fig:ExCORDICiter1}
     \end{subfigure}
     \hfill
     \begin{subfigure}[b]{0.32\textwidth}
         \centering
         \includegraphics[width=\textwidth]{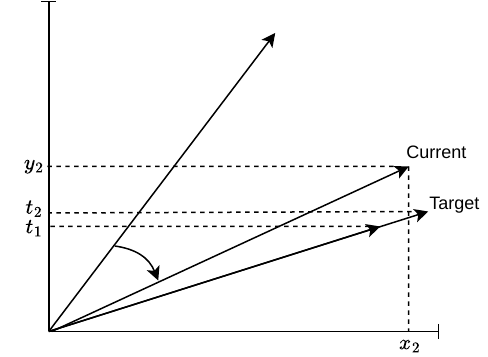}
         \caption{
         The \texttt{for} loop beginning at Line~\ref{line:pseudo-roation} performs the pseudo-rotation on our Current vector, yielding $(x_2,y_2)\approx(1.21,0.56)$.
         The target vector is stretched further by a factor of $17/16$ to compensate for the pseudo-rotation, giving $t_2\approx0.39$.
         Line~\ref{line:asin-dUpdate} determines that the next pseudo-rotation is clockwise. }
         \label{fig:ExCORDICiter2}
     \end{subfigure}
        \caption{
        Example of first two iterations of Algorithm~\ref{al:arcsin} Arcsine for $t\approx0.29$.
        The target vector has an angle of $\arcsin(t)$, note that stretching the vector does not impact the angle of the vector.
        Each pseudo-rotation is recorded and the total change in angle is summed, yielding an approximation for $\arcsin(t)$.
        These results are generated when $t$ is $300/1024$ represented with $12$ bits as a fixed point integer in $[-2,2)$ and can be viewed in the companion GitHub repository~\cite{Burge_2024}.
        }
        \label{fig:ExCORDIC}
\end{figure*}

\section{Quantum CORDIC for Arcsine}
\label{sec:qCORDIC}
There are a few immediate challenges that appear when adapting CORDIC to a quantum setting, in this section we detail the solutions step-by-step.
We first note a few important facts.
We represent the values $\theta_i, x_i, y_i, t_i$ using fixed point two's complement in the range $[-2,2)$ with one sign bit, one integer bit, and \hbox{$n-2$} fractional bits.
This avoids any issues with overflow, in the case of $\theta_i$, $\theta$ is in the range $[-\pi/2, \pi/2)$, so $|\theta| \leq \pi/2 < 2$.
In the case of $|x_i|, |y_i|, |t_i|$, each represents side-lengths of vectors with a magnitude that increases in size by $(1+4^{-i})$ each step, thus,
\begin{equation*}
    |x_i|, |y_i|, |t_i| \leq \prod\limits_{i=1}^{n} \left(1+4^{-i} \right) <  \sqrt[3]{e} < 2.
\end{equation*}
One can verify the inequality by taking $n\rightarrow \infty$, taking the $\ln$ of the infinite product, using the inequality \hbox{$\ln(1+x)<x$} for positive $x$, and finally using the geometric series.
The product represents the stretch caused by the pseudo-rotations over the iterations.
The two's complement also allows us to check if a value is negative easily using the most significant bit, if it is $1$ then the value is negative, otherwise, the value is positive.
To approximate multiplication by $2^{-m}$, we use the right bit-shift operation, $x_0x_1\cdots x_{n-1}\gg m$, we move each bit to the right by $m$ places, ignoring the $m$ rightmost bits.  
Also note that, when right bit-shifting in two's complement, the previous leftmost bit is copied into the new leftmost bits, for example:
\begin{equation*}
    x_0 x_1 x_2 x_3 x_4 \gg 2 = x_0 x_0 x_0 x_1 x_2 , \quad x_k\in\{0,1\}.
\end{equation*}
This ensures right bitshifts work as a division by powers of two for negative numbers. %

Now, let us go through the steps of the quantum implementation, assuming $n$ bits each to represent $\theta_i, x_i, y_i, t_i$ as fixed point numbers in registers $\texttt{ang}, \texttt{x}, \texttt{y}, \texttt{t}$ respectively.
Allocate $n-1$ bits to represent \hbox{$d=d_1d_2\cdots d_{n-1}$} as a bit array in register \texttt{d} where $\texttt{d}_i$ represents the $i$th register qubit.
Initialization is trivial, we use $\texttt{t}\gets t$ as the input. 
Note that we do \emph{not} need to use new registers for each iteration, one register per value is sufficient.
Additionally, in this work we consider the leftmost bit is most significant.

We perform Step~1 with the following sequence of operations. 
We begin with the state:
\begin{equation*}
    \ket{0}_{\texttt{d}_i} \ket{x_i}_\texttt{x} \ket{y_i}_\texttt{y} \ket{t_i}_\texttt{t}
\end{equation*}
Where $\texttt{d}_i$ is the $i$th bit of the \texttt{d} register.
We perform a subtraction (implemented with an inverse addition gate) from \texttt{y} to \texttt{t}, yielding,
\begin{equation*}
    \ket{0}_{\texttt{d}_i} \ket{x_i}_\texttt{x} \ket{y_i}_\texttt{y} \ket{t_i-y_i}_\texttt{t}.
\end{equation*}
Looking directly at the sign bits of the \texttt{x}, \texttt{y}, and \texttt{t} registers gives $s(x_i)$, $s(y_i)$, and $s(t_i-y_i)$ respectively, where \hbox{$s(z)=1$} if $z<0$ else $s(z)=0$.
We next perform the following operations,
\begin{itemize}
    \item Perform a \texttt{Toffoli} gate using the sign bits of the \texttt{x} and \texttt{t} registers as controls and qubit $\texttt{d}_i$ as the target.
    \item Perform a \texttt{Toffoli} gate using the sign bits of the \texttt{x} and \texttt{y} registers as controls and qubit $\texttt{d}_i$ as the target.
    \item Perform a \texttt{CNOT} gate using the sign bits of the \texttt{x} register as control and qubit $\texttt{d}_i$ as the target.
    \item Perform a \texttt{CNOT} gate using the sign bits of the \texttt{t} register as control and qubit $\texttt{d}_i$ as the target.
\end{itemize}
Finally, we can undo the previous subtraction by adding the \texttt{y} register back to the \texttt{t} register.
This gives us a correct $d_i$ bit without side effects.

Steps~2 and~4 are trivial to implement using swap gates between each bit of the \texttt{x} and \texttt{y} register controlled by $\texttt{d}_i$, this can be done using $18n$ \texttt{CNOT}s.
Step~3 is the most difficult.
It leverages Algorithm~2 for multiplying by $(1+2^{-k})$.
Note, that each step of Algorithm~2 is reversible.
An explanation for \texttt{Mult} is the subject of Appendix~\ref{appendix:mult}.
For notational simplicity, we assume \texttt{x} and \texttt{y} have not been swapped.
Our initial state for this step is,
\begin{equation*}
    \ket{x_i}_\texttt{x}\ket{y_i}_\texttt{y}.
\end{equation*}
We first subtract $y_i \gg i$ ($y_i$ right shifted by $i$ bits) from register \texttt{x}, giving us approximately, 
\begin{equation*}
    \ket{x_i-2^{-i}y_i}_\texttt{x}\ket{y_i}_\texttt{y}.
\end{equation*}
Next, we need to take an auxiliary register with value near $0$.
We apply $\texttt{Mult}(\texttt{y}, \texttt{aux}, 2i)$ to the \texttt{y} register,
\begin{equation*}
    \ket{x_i-2^{-i}y_i}_\texttt{x}\ket{y_i+2^{-2i}y_i}_\texttt{y}.
\end{equation*}
Finally, we add the \texttt{x} register shifted, $(x_i-2^{-i}y_i) \gg i$, to the \texttt{y} register,
\begin{equation*}
    \ket{x_i-2^{-i}y_i}_\texttt{x}\ket{y_i+2^{-2i}y_i+2^{-i}(x_i-2^{-i}y_i)}_\texttt{y}.
\end{equation*}
Which equals $\ket{x_i-2^{-i}y_i}_\texttt{x}\ket{y_i+2^{-i}x_i}_\texttt{y}$.
By repeating these operations once more, Step~4 is accomplished.
Note that the bit shifting and the application of \texttt{MULT} apply a small but compounding error as a result of rounding.
The analysis of this error will be the subject of future work.

With our new $\texttt{Mult}$ tool, Step~5 is trivial.
We simply apply $\texttt{Mult}(\texttt{t}, \texttt{aux}, 2i)$ to the \texttt{t} register.
Finally, for Step~$6$ we apply controlled negation of the $\texttt{ang}$ register using $\texttt{d}_i$ as the control. 
Then add the precomputed $2\arctan2^{-i}$ to the $\theta$ register; this operation can be encoded directly on quantum hardware.
Then, once again, perform a controlled negation of the $\texttt{ang}$ register using $\texttt{d}_i$ as the control.
This sequence gives,
\begin{align*}
    \ket{\theta_i}_\texttt{ang} & \rightarrow \ket{(-1)^{d_i}\theta_i}_\texttt{ang} \rightarrow \ket{(-1)^{d_i}\theta_i + 2\arctan2^{-i}}_\texttt{ang}\\
    &\rightarrow \ket{(-1)^{2d_i}\theta_i + (-1)^{d_i}2\arctan2^{-i}}_\texttt{ang}\\
    &= \ket{\theta_i + (-1)^{d_i}2\arctan2^{-i}}_\texttt{ang} = \ket{\theta_{i+1}}_\texttt{ang}.
\end{align*}
This step can be improved slightly by replacing the negations to the \texttt{ang} register with bitwise \texttt{NOT}s.

Thus, we have a quantum-compatible method for applying each step of the classical CORDIC algorithms.
As we see in Section~\ref{sec:complexity}, these adapted techniques add minimal error in simulation.

\section{Digital to Amplitude Transformation}
\label{sec:tranImplem}
\noindent Suppose we are given a $5n$-qubit quantum state of the form,
\begin{equation*}
    \ket{\kappa_0} = \sum\limits_{k=0}^{N-1} 
    \alpha_k \ket{h_k}_\texttt{t}
    \ket{0}^{\otimes n-1}_\texttt{d}
    \ket{0}^{\otimes n}_\texttt{x}
    \ket{0}^{\otimes n}_\texttt{y}
    \ket{0}^{\otimes n}_\texttt{mult}
    \ket{0}_\texttt{out},
\end{equation*}
where $N=2^{n}$, the \texttt{t} register has size $n$.
Additionally, we have $h_k=k2^{-n}$, where if $k>N/4$ then $\alpha_k=0$.
In other words, register \texttt{t} stores a fixed point binary number between $-2$ and $2$ in two's complement, but we restrict the initial value to be between $0$ and $1$.
The registers \texttt{x, y, mult} are of size $n$, register \texttt{d} is of size $n-1$, and each is initialized to $\ket{0}$.
Finally, the \texttt{out} register is one qubit.
It stores a value corresponding to $h_k$ in its probability amplitudes. 

Using a few modifications to Algorithm~\ref{al:arcsin}, we can leverage CORDIC Arcsine for Digital to Amplitude conversion.
To make use of Remark~\ref{remark:asinOFsqrt}, we first multiply register \texttt{t} by two and subtract one.
Next, we set the \texttt{x} register equal to $1$.
This gives us state,
\begin{equation*}
    \ket{\kappa_1} = \sum\limits_{k=0}^{N-1} \alpha_k \ket{h_k'}_\texttt{t} \ket{0}^{\otimes n}_\texttt{d}\ket{1}_\texttt{x}\ket{0}^{\otimes n}_\texttt{y} \ket{0}^{\otimes n}_\texttt{mult} \ket{0}_\texttt{out},
\end{equation*}
where $h_k'=2h_k-1 \in [-1, 1)$.

We then apply Algorithm~3.
Note that, for our current problem, we do not require an explicit representation of theta.
As a result, we need not include the \texttt{ang} register.
This yields the state,
\begin{equation}\label{eq:kappa2}
    \ket{\kappa_2} = \sum\limits_{k=0}^{N-1} \alpha_k \ket{h_k'}_\texttt{t} \ket{\texttt{DA}(h'_k)}_\texttt{d}\ket{x'}_\texttt{x}\ket{y'}_\texttt{y} \ket{g}_\texttt{mult} \ket{0}_\texttt{out}.
\end{equation}
The $j$th bit of $\texttt{DA}{(h'_k)}$ represents the rotation direction at iteration $j$ of the algorithm.
$x'$, $y'$, and $g$ are the resulting garbage.

\begin{remark} \label{rem:discreteBasi}
    By Mazenc \cite{mazenc1993computing}, we have the following relation,
    \begin{equation*}
        \frac{\arcsin\left(h_k'\right)}{2} \approx \sum\limits_{j=0}^{n}  (-1)^{\texttt{DA}\left(h_k'\right)_j} \arctan \left(2^{-j}\right),
    \end{equation*}
    with error of order $\mathcal{O}(2^{-n})$.
\end{remark}
\begin{remark} \label{remark:asinOFsqrt}
    $\arcsin(\sqrt{x})$ is an affine transformation of $\arcsin (x)$,
    \begin{equation*}
        \arcsin\left(\sqrt{x}\right) = \frac{\arcsin(2x-1)}{2} + \frac{\pi}{4}.
    \end{equation*}
\end{remark}

\noindent Consider, the state $R'\ket{\texttt{DA}(h_k')}_\texttt{d}\ket{0}_\texttt{out}$, which is equal to,
\begin{scriptsize}\begin{align*}
    & \ket{\texttt{DA}(h_k')}_\texttt{d} 
    \Bigg[ \cos\left( \frac{\pi}{4} + \sum_{j=0}^{n}  (-1)^{\texttt{DA}(h_k')_j} \arctan \left(2^{-j}\right) \right) \ket{0} \\
    & + \sin\left( \frac{\pi}{4} + \sum_{j=0}^{n}  (-1)^{\texttt{DA}(h_k')_j} \arctan \left(2^{-j}\right) \right)\ket{1} \Bigg]_\texttt{out}.
\end{align*} \end{scriptsize}
Where $R'$ is implemented in Figure~\ref{fig:encodeDAsAmplitude}. 
Applying $R'$ to the \texttt{d} and \texttt{out} registers of state $\ket{\kappa_2}$, by Remark~\ref{rem:discreteBasi}, gives the approximate state,
\begin{align*}
    \ket{\kappa_3} \approx \sum\limits_{k=0}^{N-1} & \alpha_k \ket{h_k'}_\texttt{t} \ket{\texttt{DA}(h_k')}_\texttt{d}\ket{0}_\texttt{x}^{\otimes n} \ket{0}_\texttt{y} ^{\otimes n} \ket{0}_\texttt{mult}^{\otimes n} \\
    & \otimes \Bigg[ \cos\left( \frac{\pi}{4} + \frac{\arcsin\left(2h_k-1\right)}{2} \right) \ket{0} \\
    & + \sin\left( \frac{\pi}{4} + \frac{\arcsin\left(2h_k-1\right)}{2} \right)\ket{1} \Bigg]_\texttt{out}.
\end{align*}
By Remark~\ref{remark:asinOFsqrt}, this is equal to state,
\begin{scriptsize}
\begin{equation*}
    \sum\limits_{k=0}^{N-1} \alpha_k \ket{h_k'}_\texttt{t} \ket{\texttt{DA}(h_k')}_\texttt{d}\ket{x'}_\texttt{x}^{\otimes n} \ket{y'}_\texttt{y}^{\otimes n} \ket{g}_\texttt{mult}^{\otimes n}
    \left[  \sqrt{1-h_k} \ket{0} + \sqrt{h_k} \ket{1} \right]_\texttt{out}.
\end{equation*}
\end{scriptsize}

The registers \texttt{d}, \texttt{x}, \texttt{y}, \texttt{mult} can be cleaned by performing Algorithm~\ref{al:digToAmp} in reverse. 
We subtract one from register \texttt{x} to zero it.
After adding one to and dividing register \texttt{t} by two and subtracting one from \texttt{x}, the binary to amplitude transformation is complete, yielding,
\begin{scriptsize}
\begin{equation*}
    \ket{\kappa_4} \approx \sum\limits_{k=0}^{N-1} \alpha_k \ket{h_k}_\texttt{t} \ket{0}_\texttt{d}\ket{0}_\texttt{x}^{\otimes n} \ket{0}_\texttt{y}^{\otimes n} \ket{0}_\texttt{mult}^{\otimes n}
    \left[  \sqrt{1-h_k} \ket{0} + \sqrt{h_k} \ket{1} \right]_\texttt{out}.
\end{equation*}
\end{scriptsize}

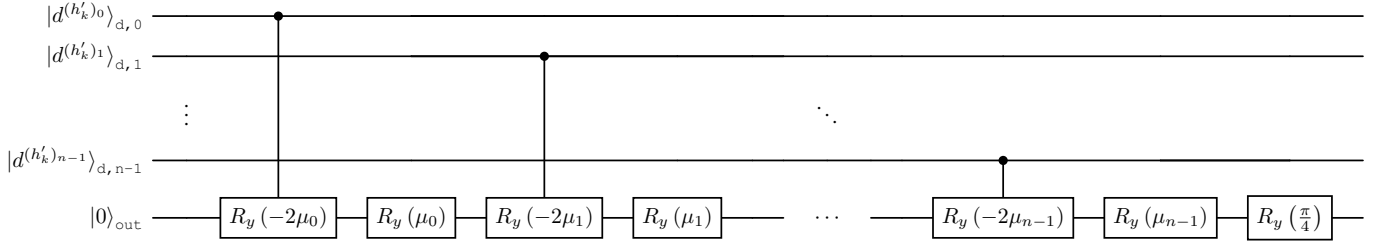
\begin{figure*}
\centering
\begin{adjustbox}{width=\textwidth}
\begin{tikzpicture}
\node[scale=1] {
    \begin{quantikz}
    \lstick{$\ket{\texttt{DA}(h_k')_0}_{\texttt{d}_0}$} &&
    \ctrl{4} &  & \qw & \qw & \qw &\qw && & & & &&
    \\
    \lstick{$\ket{\texttt{DA}(h_k')_1}_{\texttt{d}_1}$} &&
    &  & \ctrl{3} \qw & \qw & \qw &\qw && & & &&&
    \\
    \setwiretype{n} & \vdots &  &&& & && \ddots & &
    \\
    \lstick{$\ket{\texttt{DA}(h_k')_{n-1}}_{\texttt{d}_{n-1}}$} & &&& & && & &&&\ctrl{1} &&\qw &
    \\
    \lstick{$\ket{0}_\texttt{out}$} && \gate{R_y\left(-2\mu_0\right)} & \gate{R_y\left(\mu_0\right)} & \gate{R_y\left(-2\mu_1\right)} & \gate{R_y\left(\mu_1\right)} & && \setwiretype{n}\cdots & & \setwiretype{q} &  \gate{R_y\left(-2\mu_{n-1}\right)} & \gate{R_y\left(\mu_{n-1}\right)} & \gate{R_y\left(\frac{\pi}{4}\right)} &
\end{quantikz}
};
\end{tikzpicture}
\end{adjustbox}
\caption{Unitary transformation $R'$ which transfers the binary encoding of CORDIC rotation direction stored \texttt{d} register into the amplitude of the \texttt{out} register. 
Where $\texttt{DA}(h_k')_i$ represents the $i$th bit of $\texttt{DA}(h_k')$.
Define \hbox{$R_{y}(\omega)=(\cos\omega, -\sin\omega; \sin\omega, \cos\omega)$}, and \hbox{$\mu_i=\tan^{-1}2^{-i}$}.
}
\label{fig:encodeDAsAmplitude}
\end{figure*}

\begin{algorithm}
  \caption{Digital-to-Amplitude: Classical Step}\label{al:digToAmp}
  \begin{algorithmic}[1]
    \Procedure{DA}{$t$, $\texttt{\#iter}$}\Comment{$t \in [-1,1]$}\Comment{$s(z)=1$ if $z<0$ else $s(z)=0$}
      \State $\texttt{x} \gets 1$, $\texttt{y} \gets 0$, $\texttt{t} \gets t, \texttt{d} \gets 0$ \label{line:DAasin-init}
      \For{$i=1, i<\texttt{\#iter}, \texttt{++}i$}
        \State{\footnotesize $\texttt{d}_i \gets [s(\texttt{x}) \land s(\texttt{t}-\texttt{y})] \oplus s(\texttt{x}) \oplus [s(\texttt{x})\land s(\texttt{y})]\oplus s(\texttt{t}-\texttt{y})$\label{line:DAasin-dUpdate}}
        \If{$\texttt{d}_i$}
          $\texttt{x}, \texttt{y} \gets \texttt{y}, \texttt{x}$ 
        \EndIf
        \For{\texttt{\_ in range(2)}}
            \State $\texttt{x}, \texttt{y} \gets \texttt{x}-2^{-i}\texttt{y}, \texttt{y}+2^{-i}\texttt{x}$
        \EndFor
        \If{$\texttt{d}_i$}
          $\texttt{x}, \texttt{y} \gets \texttt{y}, \texttt{x}$ 
        \EndIf
        \State $\texttt{t} \gets (1+2^{-2i})\texttt{t}$
      \EndFor
      \State \textbf{return} $\texttt{d}$
    \EndProcedure
  \end{algorithmic}
\end{algorithm}

\section{Complexity and Simulations}
\label{sec:complexity}

In this section, we detail the complexity to construct $\ket{\kappa_2}$, Equation~\eqref{eq:kappa2}.
The quantum implementation of Algorithm~3 applies \texttt{Mult} three times per iteration, and addition six times per iteration.
\texttt{Mult} takes less than $\sqrt{5}\varphi^{-2i}n + 2$ additions on the $i$th step if $i<n$, where $\varphi$ is the golden ratio, if $i\geq n$, then \texttt{Mult} can be skipped.
Thus, assuming $n$ iterations for $\mathcal{O}(n)$ bits of accuracy, we perform less than,
\begin{align*}
    3\sum\limits_{i=1}^{n/2} \left(\frac{\sqrt{5}n}{\varphi^{2i}} + 2\right) + \sum\limits_{i=1}^{n}6
    &= 9 n + 3\sqrt{5}n \sum\limits_{i=1}^{n/2} \frac{1}{\varphi^{2i}} < 14n,
\end{align*}
additions.
With $\mathcal{O}(n)$ auxiliary qubits, it is possible to compute addition in $\mathcal{O}(\log n)$ layers and $\mathcal{O}(n)$ \texttt{CNOT}s \cite{takahashi2009quantum}.
Thus, the total layer complexity is $\mathcal{O}(n\log(n))$ and the total \texttt{CNOT} complexity is $\mathcal{O}(n^2)$ where $n$ corresponds to bits of precision.
Note that, if $n$ is not large, it is better to use a non-parallel version of addition to minimize overhead.

As for space complexity, we need $\mathcal{O}(n)$ bits of space.
We require an auxiliary register of size $n$ for \texttt{Mult}, and optionally another register to speed up addition.
We need registers of $n$ bits for \texttt{x}, \texttt{y} and \texttt{t}, and optionally \texttt{ang}.
Finall, we need $n-1$ bits for register \texttt{d}.
Thus, in total, we require a minimum of $5n-1$ qubits.

\begin{figure*}
     \centering
     \includegraphics[width=.95\textwidth]{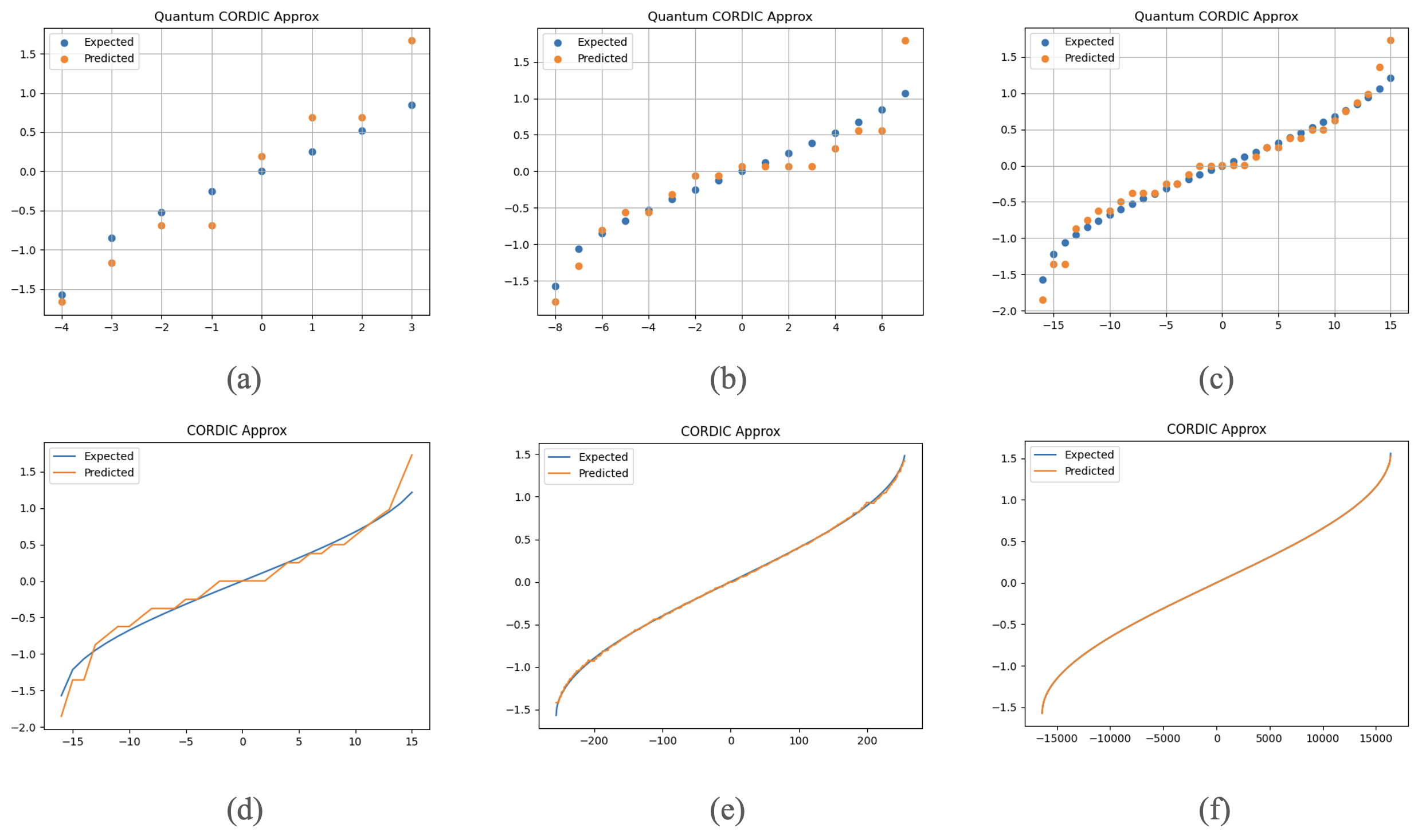}
    \caption{Results of Classical and Quantum simulations of Quantum compatible Algorithm for CORDIC Arcsine (see our GitHub repository at \href{https://github.com/iain-burge/QuantumCORDIC}{https://github.com/iain-burge/QuantumCORDIC} for further details).
    (a) Quantum simulation; $n=4$; mean error $=3.27\times10^{-1}$; max error $=8.18\times10^{-1}$. 
    (b) Quantum simulation; $n=5$; mean error $=1.83\times10^{-1}$; max error $=7.25\times10^{-1}$. 
    (c) Quantum simulation; $n=6$; mean error $=9.91\times10^{-2}$; max error $=5.13\times10^{-1}$. 
    (d) Classical simulation; $n=6$; mean error $=9.91\times10^{-2}$; max error $=5.13\times10^{-1}$. 
    (e) Classical simulation; $n=10$; mean error $=1.04\times10^{-2}$; max error $=1.51\times10^{-1}$. 
    (f) Classical simulation; $n=16$; mean error $=6.78\times10^{-4}$; max error $=3.89\times10^{-2}$.
    }
    \label{fig:Results}
\end{figure*}

In simulations, it was found that the max error for any given input can be made arbitrarily small by making $n$ larger (Figure~\ref{fig:Results}).
The full codebase can be found on our GitHub companion repository~\cite{Burge_2024}, containing a complete implementation using Qiskit as well as an identical classical implementation. 

\section{Conclusion}
\label{sec:conc}

\noindent We have introduced a Quantum algorithm for calculating \texttt{arcsin} with low space and time complexity. 
It also has applications in HHL (Harrow–Hassidim–Lloyd), approximating Shapley values, and quantum Monte Carlo methods. 
As CORDIC is most appropriate for lower precision applications, future work could apply more budget conscious implementations of CORDIC \texttt{arcsin} at the expense of asymptotic behaviour.
Beyond quantum digital-to-analog conversion, CORDIC has substantial value to bring to quantum algorithms.
In particular, developing a CORDIC module could allow for various elementary arithmetical operations to be built directly into the circuitry. 
As a result, quantum compatible CORDIC methods may find use for low to medium precision operations on early fault tolerant quantum hardware.

\bibliographystyle{plain}
\bibliography{biblio}

\appendix\label{appendix:mult}
\section*{A. \texttt{Mult} Function}

\noindent To understand the \texttt{Mult} algorithm, it is more instructive to considered its inverse, \texttt{Div} (Algorithm~\ref{al:div}). 
To simplify the following proofs, we assume infinite bits of precision in our number representations.

\begin{lemma} \label{lem:divValues}
Suppose $i$ is even, at the beginning of the $i$th iteration of Algorithm~\ref{al:div}'s \text{for} loop, the state is,
    \begin{equation*}
        \texttt{in} = z\sum_{k=0}^{F[i+1]-1} r^k,\quad 
        \texttt{aux} = z\sum_{k=0}^{F[i]-1} r^k.
    \end{equation*}
where $z$ is the initial value of \texttt{in}, and $r=-2^{-m}$.
\end{lemma}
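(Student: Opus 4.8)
The plan is an induction on $i$ that carries a companion invariant for odd indices alongside the stated one for even indices. Write $S_N := \sum_{k=0}^{N-1} r^{k}$ for the length-$N$ truncation of the geometric series $1/(1-r)$, so that the assertion to be proved reads $\texttt{in} = zS_{F[i+1]}$ and $\texttt{aux} = zS_{F[i]}$ at the top of each even iteration. I would instead establish the two-parity invariant: at the top of iteration $i$ the pair $(\texttt{in},\texttt{aux})$ equals $(zS_{F[i+1]},\, zS_{F[i]})$ when $i$ is even and $(zS_{F[i]},\, zS_{F[i+1]})$ when $i$ is odd. Specializing to even $i$ then gives the lemma.

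Two elementary facts do the work. Reindexing a finite geometric sum gives the splitting identity $S_{a+b} = S_a + r^{a}S_b$; and the array $F$ satisfies $F[i+2] = F[i+1] + F[i]$ with $F[0] = 0$ and $F[1] = 1$. Together these yield $S_{F[i+2]} = S_{F[i]} + r^{F[i]}S_{F[i+1]}$, which is precisely the shape of one register update in Algorithm~3's loop. The base case is a direct computation: the initialization leaves $\texttt{in} = z$ and $\texttt{aux} = 0$, and since $S_{F[1]} = S_1 = r^{0} = 1$ and $S_{F[0]} = S_0 = 0$ (empty sum), the even-index invariant holds at iteration $0$.

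For the inductive step, assume the invariant at the top of an even iteration $i$, so $\texttt{in} = zS_{F[i+1]}$ and $\texttt{aux} = zS_{F[i]}$. Iteration $i$ of Algorithm~3 adds $r^{F[i]}\cdot\texttt{in}$ into $\texttt{aux}$, where the factor $r^{F[i]} = (-1)^{F[i]}2^{-mF[i]}$ is implemented by an arithmetic right shift of $\texttt{in}$ by $mF[i]$ positions followed by the $F[i]$-parity-controlled negation; under the infinite-precision hypothesis the shift is exact, so afterwards $\texttt{aux} = z\big(S_{F[i]} + r^{F[i]}S_{F[i+1]}\big) = zS_{F[i+2]}$ while $\texttt{in}$ is untouched -- exactly the odd-index invariant at iteration $i+1$. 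The following pass, iteration $i+1$, symmetrically adds $r^{F[i+1]}\cdot\texttt{aux}$ into $\texttt{in}$; the same identity with indices advanced by one sends $\texttt{in}$ to $zS_{F[i+3]}$ and leaves $\texttt{aux} = zS_{F[i+2]}$, which is the even-index invariant at iteration $i+2$. This closes the induction.

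The algebra here is routine; the part that needs care -- and the likely main obstacle -- is faithfully matching the abstract identity $S_{F[i+2]} = S_{F[i]} + r^{F[i]}S_{F[i+1]}$ to the concrete code: confirming which register is source and which is target at each parity, that the shift amount is $mF[i]$ rather than a fixed $m$, and that the sign bookkeeping for $r = -2^{-m}$ agrees with the negation actually performed. One must also pin down the loop indexing so that ``the beginning of the $i$th iteration'' lines up with the pair $\big(F[i+1], F[i]\big)$ rather than a shifted pair; once that is nailed down the rest is mechanical.
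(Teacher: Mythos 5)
Your argument is essentially the paper's: an induction that consumes two loop passes at a time, using the splitting identity $S_{a+b}=S_a+r^aS_b$ together with $F[i+2]=F[i+1]+F[i]$ to show that the even pass updates \texttt{aux} and the odd pass updates \texttt{in}, leaving the other register fixed. The one point of disagreement is the base case: you take $F[0]=0$, $F[1]=1$ so that $\texttt{aux}=0$ at the top of iteration $0$, whereas the paper's proof states the loop begins with $\texttt{in}=\texttt{aux}=zr^0=z$, i.e., the algorithm's Fibonacci array is indexed with $F[0]=F[1]=1$. You flagged exactly this indexing question yourself; once the convention is aligned with the code, your induction goes through unchanged and coincides with the paper's.
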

\begin{proof}
    We proceed by induction.
    Base case: for $i=0$, at the beginning of the loop $\texttt{in}=\texttt{aux}=zr^0$.
    Thus, the base case holds.

Inductive step: suppose that for an even \hbox{$i< \left\lceil \sqrt{5}\varphi^{-m}n\right\rceil$}, the registers hold the values,
    \begin{equation*}
        \texttt{in} = z\sum_{k=0}^{F[i+1]-1} r^k,\quad 
        \texttt{aux} = z\sum_{k=0}^{F[i]-1} r^k.
    \end{equation*}
    $i$ is even so the logical statement in Line~6 is true.
    Hence, \texttt{aux} is updated to,
    \begin{equation*}
        \texttt{aux} \gets z\sum_{k=0}^{F[i]-1} r^k + (-1)^{\texttt{F}[i]} \cdot 2^{m \texttt{F}[i]} z\sum_{k=0}^{F[i+1]-1} r^k.
    \end{equation*}
    By definition~$(-1)^{\text{\texttt{F}[i]}}\cdot 2^{m \texttt{F}[i]} = r^{\text{\texttt{F}}[i]}$, thus, the new value for \texttt{aux} is
    \hbox{$z\sum_{k=0}^{F[i]-1} r^k + z\sum_{k=F[i]}^{F[i+1]+F[i]-1} r^k$}.
    By definition of the Fibonacci sequence, \hbox{$F[i+2]=F[i+1]+F[i]$}, combining the summations yields,
    \begin{equation*}
        \texttt{aux} = z\sum_{k=0}^{F[i+2]-1} r^k.
    \end{equation*}

    On the next loop, the index register holding $i$ incremented by $1$ and is now odd, so the if statement on Line~6 gets false.
    Hence, the next operation updates \texttt{in},
    \begin{equation*}
        \texttt{in} \gets z\sum_{k=0}^{F[i+1]-1} r^k + (-1)^{\texttt{F}[i+1]} \cdot 2^{m \texttt{F}[i+1]} z\sum_{k=0}^{F[i+2]-1} r^k
    \end{equation*}
    Similarly to the previous iteration, we find \texttt{in} is equal to
    \hbox{$z\sum_{k=0}^{F[i+1]-1} r^k + z\sum_{k=F[i+1]}^{F[i+2]+F[i+1]-1} r^k$}.
    Bringing the summations together shows,
    \begin{equation*}
        \texttt{in} = z\sum_{k=0}^{F[i+3]-1} r^k .
    \end{equation*}
    The index register is incremented once again, such that it holds $i+2$.
    By induction, the result holds.
\end{proof}

\begin{remark}
    Recall, for $r\in(0,1)$, and an integer $J$, the geometric sum formula gives,
    \begin{equation*}
        \sum\limits_{k=0}^{J-1} r^k = \frac{1-r^J}{1-r} = \frac{1-2^{-Jm}}{1+2^{-m}}.
    \end{equation*}
\end{remark}

\begin{algorithm}
  \caption{Div($\texttt{in},\texttt{aux},m$): $\texttt{in} \gets (1+2^{-m})^{-1}\texttt{in}$}\label{al:div}
  \begin{algorithmic}[1]
    \Procedure{Div}{$\texttt{in}$, \texttt{aux}, $m$}
      \Comment{$\texttt{in}$ is a register of size $n$, \texttt{aux} is a auxiliary register of size $n$ with near 0 value, $m$ is the right shift}
      \State{\texttt{F} $\gets [1,1,2,3,5,8,13,\dots]$} \Comment{Fibbonacci Sequence}
      \State{$\texttt{\#iter} \gets 2\left\lceil\log_\varphi(\sqrt{5} n/m)/2\right\rceil$} \Comment{$\varphi = (1+\sqrt{5})/2$}
      \State{$\texttt{aux} \gets \texttt{aux}+\texttt{in}$} \label{line:initStart}
      \For{$i=0$, $i < \texttt{\#iter}$, $\texttt{++}i$} \label{line:buildX}
        \If{$i$ even} \label{line:ifIeven}
          \State $\texttt{aux} \gets \texttt{aux}+(-1)^{\texttt{F}[i]}(\texttt{in}\gg(m\cdot \texttt{F}[i]))$
        \Else
          \State $\texttt{in} \gets \texttt{in}+(-1)^{\texttt{F}[i]}(\texttt{aux}\gg(m\cdot \texttt{F}[i]))$ 
        \EndIf
      \EndFor
      \State $\texttt{aux} \gets \texttt{aux}-\texttt{in}$\label{line:div-EmptyAux}
    \EndProcedure
  \end{algorithmic}
\end{algorithm}

\begin{theorem}
    Assuming $J$ iterations, the final state is,
    \begin{equation*}
        \texttt{in} \approx\frac{z}{1-2^{-m}},\quad
        \texttt{aux} \approx 0,
    \end{equation*}
    with an accuracy of $\mathcal{O}\left(m\varphi^J\right)$ bits.
\end{theorem}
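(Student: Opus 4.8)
The plan is to read the final state straight off Lemma~\ref{lem:divValues} and then control the tail of a geometric series. First I would dispose of parity. For even $J$, Lemma~\ref{lem:divValues} gives, after $J$ iterations of the \texttt{for} loop, $\texttt{in}=z\sum_{k=0}^{F[J+1]-1}r^{k}$ and $\texttt{aux}=z\sum_{k=0}^{F[J]-1}r^{k}$, where $z$ is the initial value of \texttt{in} and $|r|=2^{-m}$. The intermediate states displayed in the proof of that lemma show that for odd $J$ the same two sums occur with the roles of \texttt{in} and \texttt{aux} exchanged, so it suffices to treat one parity and absorb the other into the $\mathcal{O}(\cdot)$.

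Next I would apply the finite geometric-sum identity recalled just before the theorem, $\sum_{k=0}^{N-1}r^{k}=\frac{1-r^{N}}{1-r}$, obtaining $\texttt{in}=z\cdot\frac{1-r^{F[J+1]}}{1-r}$. The leading term $\frac{z}{1-r}$ is the claimed limit (equal to $\frac{z}{1-2^{-m}}$ under the paper's sign convention for $r$), and the discrepancy is $\frac{z\,r^{F[J+1]}}{1-r}$, so the relative error obeys $\bigl|\frac{\texttt{in}(1-r)}{z}-1\bigr|=|r|^{F[J+1]}=2^{-m F[J+1]}$. For \texttt{aux} I would combine the same closed form with the post-loop adjustment of Algorithm~3: the two register values differ by $\texttt{in}-\texttt{aux}=z\sum_{k=F[J]}^{F[J+1]-1}r^{k}=z\,r^{F[J]}\sum_{k=0}^{F[J-1]-1}r^{k}$, of magnitude at most $|z|\,2^{-mF[J]}/(1-2^{-m})$, so once this difference is used to restore the ancilla the register \texttt{aux} is left this exponentially small, i.e. $\texttt{aux}\approx 0$.

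The remaining step converts these exponential bounds into a bit count. Writing $\psi=-1/\varphi$ and using Binet's formula $F[J]=(\varphi^{J}-\psi^{J})/\sqrt{5}$, we have $F[J+1]=\Theta(\varphi^{J})$, hence $2^{-mF[J+1]}=2^{-\Theta(m\varphi^{J})}$. Therefore both registers agree with their target values to a number of correct bits that grows like $m\varphi^{J}$, which is the asserted accuracy of order $m\varphi^{J}$ bits.

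I expect the main obstacle to be bookkeeping rather than anything deep: keeping the two Fibonacci indices $F[J]$ and $F[J+1]$ straight across the two parities, pinning down exactly what Algorithm~3 does once the loop terminates so that ``$\texttt{aux}\approx 0$'' is derived rather than merely asserted, and reading ``accuracy of $\mathcal{O}(m\varphi^{J})$ bits'' in the intended sense, namely that the number of correct bits is on the order of $m\varphi^{J}$ (a guarantee on precision, not an error bound). A secondary care point is the sign convention: Lemma~\ref{lem:divValues} takes $r=-2^{-m}$, so $r^{F[J+1]}$ carries a factor $(-1)^{F[J+1]}$ that is harmless for the absolute-value estimates but must be tracked when identifying the limit $z/(1-r)$ with the displayed $z/(1-2^{-m})$.
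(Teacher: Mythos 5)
Your proposal is correct and follows essentially the same route as the paper's proof: read the pre-adjustment state off Lemma~\ref{lem:divValues}, apply the geometric-sum identity to \texttt{in}, observe that the final subtraction leaves \texttt{aux} holding the tail $z\,r^{F[J]}\sum_{k=0}^{F[J-1]-1}r^{k}$, and convert the $2^{-mF[J]}$ bounds into $\Theta(m\varphi^{J})$ bits via the growth of the Fibonacci numbers. Your flag about the sign convention is well taken --- with $r=-2^{-m}$ the limit is $z/(1-r)=z/(1+2^{-m})$, which is what the paper's own proof derives, so the $1-2^{-m}$ in the theorem statement appears to be a typo rather than something your argument needs to recover.
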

\begin{proof}
    First note the state prior to Line~10 is,
    \begin{equation*}
        \texttt{in} = z\sum_{k=0}^{\texttt{F}[J+1]-1} r^k,\quad 
        \texttt{aux} = z\sum_{k=0}^{\texttt{F}[J]-1} r^k.
    \end{equation*}
    For simplicity, we assume $J$ is even  and greater than or equal to two.
    Then, Line~10 results in state,
    \begin{equation*}
        \texttt{aux} \gets z\sum_{k=0}^{\texttt{F}[J]-1} r^k - z\sum_{k=0}^{\texttt{F}[J+1]-1} r^k.
    \end{equation*}
    As a result, \texttt{aux} is equal to
    \hbox{$- z\sum_{k=\texttt{F}[J]}^{\texttt{F}[J+1]-\texttt{F}[J]-1} r^k$}.
    This can be simplified to,
    \hbox{$-z r^{\texttt{F}[J]} \sum_{k=0}^{\texttt{F}[J-1]-1} r^k$}.
    Then, the geometric sum formula gives,
    \begin{equation*}
        \texttt{aux} = -z r^{\texttt{F}[J]} \frac{1-r^{\texttt{F}[J-1]}}{1-r}.
    \end{equation*}
    Note that $\varphi^{k-1} < \texttt{F}[k]$ for positive $k$.
    Thus,
    \begin{equation*}
        \texttt{aux} < -z 2^{-\varphi^{J-1}m} \frac{1-2^{-\varphi^{J-2} m }}{1-2^{-m}}.
    \end{equation*}
    Note that $z$ is always less than $2$ as argued at the beginning of Section~\ref{sec:qCORDIC}.
    Since $m\geq 1$, $J\geq 2$, we have \hbox{$|aux| \leq 2^{-\varphi^{J-1}m + 2}$}.
    Thus, the error introduced to \texttt{aux} in terms of bit precision is exponentially small.
    
\noindent  On the other hand, the \texttt{in} register state is,
    \begin{equation*}
        \texttt{in} = z \frac{1-2^{-mF[J]}}{1+2^{-m}}. 
    \end{equation*}
    Thus, we have an absolute error of less than,
    \hbox{$2\cdot 2^{-mF[J]} \leq 2^{-m\varphi^{J-1}}$}.
    Therefore, the absolute error is exponentially small with respect to number of iterations.
\end{proof}

\balance

\end{document}